\newtheorem{definition}{Definition}
\newtheorem{lemma}{Lemma}
\newtheorem{proposition}{Proposition}
\newtheorem{theorem}{Theorem}
\newtheorem{example}{Example}
\newcommand{\p}[1]{\mathtt{#1}}
\newcommand{\trans}[3]{\p #2 \rightarrow \p #3 \!:\! #1}
\newcommand{\transl}[3]{\p #2 \rightarrow \p #3 : #1 } 
\newcommand{\op}[1]{\mathsf{#1}}
\newcommand{\role}[2]{[\,#1\,]_{\p #2}}
\newcommand{\co}[1]{\overline{#1}}
\newcommand{\msg}[1]{\langle #1 \rangle}
\newcommand{\one}{\mathbf{1}}
\newcommand{\zero}{\mathbf{0}}
\newcommand{\tick}{\surd}
\newcommand{\arro}[1]{\xrightarrow[]{#1}}
\DeclareMathOperator{\transI}{transI}
\DeclareMathOperator{\transF}{transF}
\DeclareMathOperator{\roles}{roles}
\DeclareMathOperator{\fconf}{\mbox{}^f\#}
\DeclareMathOperator{\proj}{proj}
\def \mathrule #1#2#3{\begin{array}{l}
    {\mbox{\scriptsize ({\sc #1})} }
    \\ \bigfract{#2}{#3}
\end{array}}
\newcommand{\bigfract}[2]{\frac{^{\textstyle #1}}{_{\textstyle #2}}}
\def \mathax #1#2{\begin{array}{l} {\mbox{\scriptsize ({\sc #1})} } \\ #2
\end{array}}
\title{Amending Choreographies}
\author{Ivan Lanese 
\institute{Focus Team, University of Bologna/INRIA, Italy}
\email{lanese@cs.unibo.it} 
\and Fabrizio Montesi 
\institute{IT University of Copenhagen, Denmark}
\email{fmontesi@itu.dk}
\and Gianluigi Zavattaro
\institute{Focus Team, University of Bologna/INRIA, Italy}
\email{zavattar@cs.unibo.it} 
}
\begin{document}
\maketitle

\begin{abstract}
Choreographies are global descriptions of system behaviors, from
which the local behavior of each endpoint entity can be obtained
automatically through projection.  To guarantee that its projection is
correct, i.e.\ it has the same behaviors of the original choreography, a
choreography usually has to respect some coherency conditions. This
restricts the set of choreographies that can be projected.

In this paper, we present a transformation for amending choreographies
that do not respect common syntactic conditions for projection
correctness.  Specifically, our transformation automatically reduces
the amount of concurrency, and it infers and adds hidden
communications that make the resulting choreography respect the
desired conditions, while preserving its behavior.
\end{abstract}

\section{Introduction}
Choreography-based programming is a powerful paradigm where the programmer
defines the communication behavior of a system from a global viewpoint,
instead of separately specifying the behavior of each endpoint entity.
Then, the local behavior of each endpoint can be automatically generated
through a notion of 
\emph{projection}~\cite{hondaESOP,hondaPOPL,SEFM08,chorDATA,chor:popl13}.
A projection procedure is usually a homomorphism from choreographies
to endpoint code.  However, projections of some choreographies may
lead to undesirable behavior. To characterize the class of
choreographies that can be correctly projected some syntactic
conditions have been put forward.
Consider the following choreography $C$:
\[
C 
= 
\trans{o_1}{a}{b};\; \trans{o_2}{c}{d}
\]
Here, $\p a$, $\p b$, $\p c$ and $\p d$ are \emph{participants} and $o_1$ and
$o_2$ are \emph{operations} (labels for communications).
$C$ specifies a system where $\p a$ sends to $\p b$ a message on operation
$o_1$ ($\trans{o_1}{a}{b}$), and then (; is sequential composition)
$\p c$ sends to $\p d$ a message on operation
$o_2$ ($\trans{o_2}{c}{d}$).
We can naturally project $C$ onto a system $S$ composed of CCS-like
processes annotated with roles~\cite{SEFM08}:
\[
S 
= 
\role{\co{o_1}}{a} \parallel \role{o_1}{b} \parallel
\role{\co{o_2}}{c} \parallel
\role{o_2}{d}
\]
Here, $\role{\cdot}{a}$ specifies the behavior of participant $\p a$,
$\parallel$ is parallel composition, $\co{o_1}$ denotes an output on
operation $o_1$, and $o_1$ the corresponding input.  Unfortunately,
the projected system $S$ does not implement $C$ correctly.  Indeed, $\p c$
could send its message on $o_2$ to $\p d$ before the communication on
$o_1$ between $\p a$ and $\p b$ has occurred.
This is in contrast with the intuitive meaning of the sequential composition operator $;$ in
the choreography, which explicitly models sequentiality between the two
communications.
In~\cite{SEFM08}, 
we provide syntactic conditions on choreographies
for ensuring that projection will behave correctly. Similar conditions are used in
other works (e.g., \cite{hondaESOP,hondaPOPL}).
Such conditions would reject choreography $C$ above, by recognizing that it is
not \emph{connected}, i.e., the order of communications specified in $C$ cannot
be enforced by
the projected roles.

In this paper, we present a procedure for automatically transforming, or
\emph{amending}, a choreography that is not connected into a 
behaviorally equivalent one that is connected.  Our transformation
acts in two ways. In most of the cases, it automatically infers and
adds some extra hidden communications. In a few cases instead the
problem cannot be solved by adding communications, and the amount of
concurrency in the system has to be decreased.  For example, we can
transform $C$ in the connected choreography $C'$ below:
\[
C' 
= 
\trans{o_1}{a}{b};\; \trans{o_3^*}{b}{e};\;
\trans{o_4^*}{e}{c};\; \trans{o_2}{c}{d}
\]
Here, we have added an extra role $\p e$ that interacts with $\p b$ and $\p c$
to ensure the sequentiality of their respective communications.
This is the first kind of transformation discussed above.
We refer to Example~\ref{ex:par} in Section~\ref{sec:combining} for an example of the second kind.

Our transformation brings several benefits.
First, designers could use choreographies by defining only the desired
behavior, leaving to our transformation the burden of filling in the necessary
details on how the specified orderings should be enforced.
Moreover, our solution does not change the standard definition
of projection, since we operate only at the level of choreographies.
Hence, we retain simplicity in the definition of projection.
Finally, our transformation offers an automatic way of ensuring
correctness when composing different choreographies developed
separately into a single one.

\paragraph{Structure of the paper:} Section~\ref{sec:choreo} introduces
choreographies and their semantics. Section~\ref{sec:amending}
contains the main contribution of the paper, namely the description
and the proof of correctness of the amending
techniques. Section~\ref{sec:appl} applies the developed theory to a
well-known example, the two-buyers protocol. Finally,
Section~\ref{sec:concl} discusses related work and future directions.
Appendix~\ref{sec:orch} and Appendix~\ref{sec:proj} describe
projected systems and the projection operation, respectively.

\section{Choreography Semantics}\label{sec:choreo}
We introduce here our language for modeling choreographies.
The set of participants in a choreography, called \emph{roles}, is ranged over
by $\p a$, $\p b$, $\p c$, $\ldots$. We also consider two kinds of
\emph{operations} for communications: \emph{public operations},
ranged over by $o$, which represent observable activities of the
system, and \emph{private operations}, ranged over by $o^*$, used for
internal synchronization.  We use $o^?$ to range over both public and
private operations.

Choreographies, ranged over by $C$, $C'$, $\ldots$, are
defined as follows:
$$C ::= \trans{o^?}{a}{b} \mid \one 
\mid \zero 
\mid C; C' \mid C \parallel C' \mid C + C'$$
%
An interaction $\trans{o^?}{a}{b}$ means that role $\p a$ sends a
message on operation $o^?$ to role $\p b$ (we assume that $\p
a \neq \p b$).  Besides, there are the empty choreography $\one$, the
deadlocked choreography $\zero$, sequential and parallel composition
of choreographies, and nondeterministic choice between
choreographies. Deadlocked choreography $\zero$ is only used at
runtime, and it is not part of the user syntax.
%

\begin{figure}[t]
\[
\begin{array}{c}
\mathax{Interaction}{\trans{o^?}{a}{b} \arro{\transl{o^?}{a}{b}} \one}\qquad
\mathax{End}{\one \arro{\tick} \zero}\qquad
\mathrule{Sequence}{C \arro{\sigma} C' \quad \sigma \neq
\tick}{C;C'' \arro{\sigma} C';C'' }
\qquad
\mathrule{Parallel}{C \arro{\sigma} C' \quad \sigma \neq
\tick}{C\parallel C'' \arro{\sigma} C' \parallel C''} 
\\
\mathrule{Choice}{C \arro{\sigma} C'}{C+C'' \arro{\sigma} C'}
\qquad
\mathrule{Seq-end}{C_1 \arro{\tick} C_1' \quad C_2
\arro{\sigma} C_2'}{C_1;C_2 \arro{\sigma} C_2'} \qquad
\mathrule{Par-end}{C_1 \arro{\tick} C_1' \quad C_2
\arro{\tick} C_2'}{C_1\parallel C_2 \arro{\tick}
C_1'\parallel C_2'}
\end{array}
\]
\caption{Choreographies, semantics (symmetric rules
omitted).}\label{fig:ioclts} 
\end{figure}
The semantics of choreographies is the smallest labeled transition
system (LTS) closed under the rules in Figure~\ref{fig:ioclts}. Symmetric
rules for parallel composition and choice have been omitted. We use
$\sigma$ to range over labels. We have two kinds of labels: label
$\trans{o^?}{a}{b}$ denotes the execution of an interaction
$\trans{o^?}{a}{b}$ while label $\tick$ represents the termination of
the choreography. Rule {\sc Interaction} executes an interaction. Rule
{\sc End} terminates an empty choreography. Rule {\sc Sequence}
executes a step in the first component of a sequential
composition. Rule {\sc Parallel} executes an interaction from a
component of a parallel composition, while rule {\sc Choice} starts the
execution of an alternative in a nondeterministic choice. Rule {\sc
Seq-end} acknowledges the termination of the first component of a
sequential composition, starting the second component. Rule {\sc
Par-end} synchronizes the termination of two parallel components.

We use the semantics to build some notions of traces for
choreographies, which will be used later on for stating our results.

\begin{definition}[Choreography traces]
A (strong maximal) trace of a choreography $C_1$ is a sequence of
labels $\tilde\sigma = \sigma_1, \dots, \sigma_n$ such
that there is a sequence of
transitions $C_1 \arro{\sigma_1} \dots \arro{\sigma_n}
C_{n+1}$ and that $C_{n+1}$ has no outgoing
transitions.

A weak trace of a choreography $C$ is a sequence of labels
$\tilde\sigma$ obtained by removing all the labels of the form
$\trans{o^*}{a}{b}$ (corresponding to private interactions) from a
strong trace of $C$.

Two choreographies $C$ and $C'$ are (weak) trace equivalent iff they have the same set of (weak) traces.
\end{definition}

\section{Amending Choreographies}\label{sec:amending}
In this section we consider three different connectedness properties,
and we show how to enforce each one of them, preserving the set of
weak traces of the choreography. We refer to~\cite{TR} (a technical
report extending~\cite{SEFM08}) for the description of why these
properties guarantee projectability. Actually, in~\cite{SEFM08},
different notions of projectability are considered according to
whether the semantic model is synchronous or asynchronous, and on
which behavioral relation between a choreography and its projection is
required. The conditions presented and enforced below are correct and
ensure projectability according to all the notions considered
in~\cite{SEFM08}.

All the conditions below are stated at the level of
choreographies. Nevertheless, definitions of projection, projected
system and of its semantics may help in understanding why they are
needed. We collect these definitions in the Appendix.

We discuss the connectedness conditions in increasing order of
difficulty to help the understanding. Then, in
Section~\ref{sec:combining}, we combine them to make a general
choreography connected.

\subsection{Connecting Sequences}
\emph{Connectedness for sequence} ensures that, in a sequential composition
$C;C'$, the choreography $C'$ starts its execution only after $C$
terminates.  To formalize this property, we first need to define
auxiliary functions $\transI$ and $\transF$, which compute
respectively the sets of initial and final interactions in a
choreography:
\[
\begin{array}{l}
\transI(\trans{o^?}{a}{b}) = \transF(\trans{o^?}{a}{b}) = \{\trans{o^?}{a}{b}\}
\\
\transI(\one)=\transI(\zero)=\transF(\one)=\transF(\zero)=\emptyset
\\
\transI(C \parallel C') = \transI(C +
C') = \transI(C) \cup \transI(C')
\\
\transF(C \parallel C') = \transF(C +
C') = \transF(C) \cup \transF(C')
\\
\transI(C;C')  = \transI(C) \cup \transI(C') \text{ if } \exists C'' \textrm{ such that } C 
\arro{\tick} C'',\ \ \transI(C) \text{ otherwise}\\

\transF(C;C')  = \transF(C) \cup \transF(C')
\text{ if } \exists C'' \textrm{ such that } C'
\arro{\tick} C'',\ \ \transF(C') \text{ otherwise}
\end{array}
\]
Intuitively, connectedness for sequence can be
ensured by guaranteeing that a role waits for termination of all the
interactions in $C$ and only then starts the execution of
the interactions in $C'$.
We can now formally state this property.
\begin{definition}[Connectedness for sequence]
A choreography $C$ is \emph{connected for sequence} iff each
subterm of the form $C';C''$ satisfies $\forall
\trans{o_1^?}{a}{b} \in \transF(C')$, $\forall
\trans{o_2^?}{c}{d} \in \transI(C''). \p b=\p c$.
\end{definition}
%

Our transformation reconfigures the
subterms $C';C''$ failing to meet the condition. Take one such
term $C';C''$.  Choose a fresh role
$\p e$. Consider all the interactions $\trans{o^?}{a}{b}$ contributing to
$\transF(C')$ in the term. For each of them choose a fresh
operation $o_f^*$ and replace $\trans{o^?}{a}{b}$ with
$\trans{o^?}{a}{b};\;\trans{o_f^*}{b}{e}$. Similarly, for each
interaction $\trans{o^?}{c}{d}$ contributing to $\transI(C'')$ choose a fresh
operation $o_f^*$ and replace
$\trans{o^?}{c}{d}$ with $\trans{o_f^*}{e}{c};\;\trans{o^?}{c}{d}$.  The
essential idea is that role $\p e$ waits for all the threads in
$C'$ to terminate before starting threads in $C''$.

\begin{proposition}\label{prop:connectingseq}
Given a choreography $C$ we can derive using the pattern
above a choreography $C'$ which is connected for
sequence such that $C$ and $C'$ are weak trace
equivalent.
\end{proposition}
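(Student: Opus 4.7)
The plan is to first make the amending transformation precise as a recursive procedure on the syntax of $C$, and then to establish connectedness and weak trace equivalence separately. I define $\mathcal{A}(C)$ by structural recursion: on interactions and constants it is the identity; on $C_1 \parallel C_2$ and $C_1 + C_2$ it recurses on the components; on $C_1; C_2$ it first computes $\mathcal{A}(C_1)$ and $\mathcal{A}(C_2)$, and then checks whether the connectedness condition already holds. If so it returns the composition unchanged; otherwise it applies the rewriting described before the proposition, picking a globally fresh role $\p e$ and globally fresh private operations $o_f^*$ for each final interaction of $\mathcal{A}(C_1)$ and each initial interaction of $\mathcal{A}(C_2)$. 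Termination is immediate because each recursive call is on a strict subterm and only finitely many replacements are performed at each node.

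For connectedness for sequence of $\mathcal{A}(C)$ I would argue by structural induction that every $;$-subterm satisfies the condition. Subterms inside $\mathcal{A}(C_1)$ and $\mathcal{A}(C_2)$ are handled by the induction hypothesis, once I verify that $\transI$ and $\transF$ are stable under the replacement pattern: substituting $\trans{o^?}{a}{b}$ by $\trans{o^?}{a}{b};\trans{o_f^*}{b}{e}$ does not change whether a $\tick$ transition is reachable from a given subterm, so the case analyses in the definitions of $\transI$ and $\transF$ deliver the same outcomes for the surrounding contexts. For the modified outer subterm every final interaction now has $\p e$ as receiver and every initial interaction has $\p e$ as sender, so the condition holds by construction. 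For the newly inserted $;$ subterms of the shapes $\trans{o^?}{a}{b}; \trans{o_f^*}{b}{e}$ and $\trans{o_f^*}{e}{c}; \trans{o^?}{c}{d}$, connectedness is immediate.

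The main obstacle is weak trace equivalence. I would prove it by exhibiting a correspondence in both directions between the strong traces of $C$ and those of $\mathcal{A}(C)$, showing that erasing the labels of the form $\trans{o^*}{\cdot}{\cdot}$ on the $\mathcal{A}(C)$ side yields exactly the weak traces of $C$. The forward direction uses that, by rule Seq-end, in $C$ no action of $C_2$ can fire until $C_1$ has produced $\tick$; hence every strong trace of $C$ can be simulated in $\mathcal{A}(C)$ by scheduling, immediately after each final interaction of the amended $C_1$ part fires, the corresponding $\trans{o_f^*}{b}{e}$, and then inserting all the $\trans{o_f^*}{e}{c}$ before the matching initial interactions of the amended $C_2$ part. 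For the backward direction I observe that $\p e$ is fresh and participates only in the inserted private communications; these communications form a linear chain per amended subterm that is never part of any nondeterministic choice, so they can always be fired when enabled without blocking any public action. After erasing them, the resulting sequence of public labels matches a strong trace of $C$. I would formalize both directions as weak simulations up to the relation that identifies a state of $\mathcal{A}(C)$ with the state of $C$ obtained by firing all currently enabled private synchronizations; the main technical burden is the bookkeeping that tracks, for each amended $;$-subterm, which of the $\trans{o_f^*}{b}{e}$ and $\trans{o_f^*}{e}{c}$ communications are still pending.
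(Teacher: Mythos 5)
Your proposal follows essentially the same route as the paper's own proof: the pattern is applied bottom-up from smaller to larger subterms, connectedness for sequence is established by induction using the stability of $\transI$ and $\transF$ under the replacements, and weak trace equivalence rests on the fact that only interactions on fresh private operations are inserted. The only difference is one of detail: where the paper dismisses weak trace equivalence in a single line, you sketch the two simulations explicitly (scheduling the inserted private chains eagerly and then erasing them), which is a more careful elaboration of the same argument rather than a different one.
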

\begin{proof}
We have to apply the pattern to all the subterms failing to satisfy the
condition. We start from smaller subterms, and then move to larger
ones. We have to show that at the end all the subterms satisfy the
condition. The new subterms introduced by the transformation have the
form $\trans{o^?}{a}{b};\trans{o^*}{b}{e}$ and
$\trans{o^*}{e}{c};\trans{o^?}{c}{d}$, thus they satisfy the condition
by construction. Let us consider a subterm $D';D''$ obtained by
transforming a subterm $E';E''$ which already satisfied the condition (but whose subterms may have been transformed).
It is easy to check that $\transF(D') = \transF(E')$ and $\transI(D'')
= \transI(E'')$, thus the term is still connected for sequence.  For
subterms obtained by transforming subterms that did not satisfy the
condition, the thesis holds by construction.

Weak trace equivalence is ensured since only interactions on private
operations, which have no impact on weak trace equivalence, are added
by the transformation.
\end{proof}

\subsection{Connecting Choices}
\emph{Unique points of choice} ensure that in a choice all the participants
agree on the chosen branch. Intuitively, unique points of choice can
be guaranteed by ensuring that a single participant performs the
choice and informs all the other involved participants.

\begin{definition}[Unique points of choice]
A choreography $C$ has \emph{unique points of choice} iff each subterm of the form $C'+C''$ satisfies:
\begin{enumerate}
\item\label{ch:conn} $\forall \trans{o_1^?}{a}{b} \in \transI(C'),\forall \trans{o_2^?}{c}{d}
\in \transI(C''). \p a=\p c$; 
\item\label{ch:roles} $\roles(C')=\roles(C'')$; 
\end{enumerate}
where $\roles(\bullet)$ computes the set of roles in a choreography.
\end{definition}
We present below a pattern able to ensure unique points of choice.
We apply the pattern to all the subterms of the form $C'+C''$ that do not
satisfy one of the conditions. Take one such term $C'+C''$.  If
condition \ref{ch:conn} 
is not satisfied then choose
a fresh role $\p e$.  Consider all the interactions
$\trans{o^?}{a}{b}$ contributing to $\transI(C')$ or to
$\transI(C'')$. For each of them choose a fresh operation $o^*_f$ and
replace $\trans{o^?}{a}{b}$ with
$\trans{o_f^*}{e}{a};\;\trans{o^?}{a}{b}$.

Suppose now that condition \ref{ch:conn} 
is satisfied, while
condition \ref{ch:roles} 
is not. Then we
can assume a role $\p e$ which is the sender of all the interactions in
$\transI(C'+C'')$. Consider each role $\p a$ that occurs in $C'$ but
not in $C''$ (the symmetric case is analogous). For each of them add in
parallel to $C''$ the interaction $\trans{o_f^*}{e}{a}$ where $o_f^*$
is a fresh operation.

\begin{proposition}\label{prop:connectingchoice}
Given a choreography $C$ we can derive using the pattern
above a choreography $C'$ which has unique
points of choice 
such that $C$ and $C'$ are weak
trace equivalent.
\end{proposition}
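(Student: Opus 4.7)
The plan is to mirror the proof of Proposition 1, processing subterms bottom-up and separately analyzing the two cases in which the pattern is applied. First I would verify that, after a single application of the pattern at a failing $C'+C''$, both conditions hold at that subterm. When condition 1 fails, prepending $\trans{o_f^*}{e}{a}$ to every initial interaction of either branch yields $\transI$ sets whose elements all have sender $\p e$, so condition 1 holds by construction; condition 2 also holds because $\p e$ is added to the roles of both branches while the rest is untouched. When only condition 2 fails, $\p e$ is already the common sender, and placing $\trans{o_f^*}{e}{a}$ in parallel with the branch missing role $\p a$ augments its role set by $\p a$ without changing the sender of any initial interaction.

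The next step is to check that the global procedure leaves previously satisfying outer subterms satisfying. For an enclosing $E'+E''$ with common sender $\p a$, any inner failing $F'+F''$ whose initial interactions contribute to $\transI(E')$ must already have all those interactions sent by $\p a$, so it already satisfies condition 1; only a condition-2 fix can occur inside, and its parallel additions still have sender $\p a$ and so preserve condition 1 above. The disruptive case is a condition-1 fix of an inner choice whose initial interactions are hidden from the enclosing level but which nevertheless adds a fresh role $\p e$ to one side of $E'+E''$. I would handle this by iterating: whenever an inner condition-1 fix adds $\p e$ to $\roles(E')$ but not $\roles(E'')$, reapply the condition-2 pattern at $E'+E''$. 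Each reapplication uses the already-existing common sender and thus introduces no new fresh role, so the cascade propagates only finitely many times up the term.

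Weak trace equivalence follows as in Proposition 1: all added interactions use private operations $o^*$, which are removed when passing from strong to weak traces. The prepended and parallel private interactions are always enabled when reached, since $\p e$ is fresh and has no other obligations in the relevant scope, so they neither block nor enable public behavior beyond what the original choreography allowed.

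The main obstacle I expect is the cascading effect just described, namely justifying rigorously that a condition-1 fix at one depth does not trigger a chain of condition-2 failures that would eventually reintroduce a condition-1 failure higher up. The key observation is that condition-2 fixes never introduce fresh roles, so the set of fresh roles stabilizes after the first bottom-up pass and the subsequent cascade of condition-2 repairs terminates; a clean formulation of the invariant---perhaps tracking, for each subterm, both the common sender and the role set as the transformation progresses---will be the most delicate part of the argument.
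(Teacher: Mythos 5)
Your proof follows essentially the same route as the paper's: a bottom-up pass over the choice subterms, a separate analysis of the two sub-patterns, and weak trace equivalence from the fact that only private interactions are added. The cascading issue you worry about is handled in the paper simply by the bottom-up order (an inner fix that introduces a fresh role is seen by the enclosing choice when that enclosing choice is processed later), and your observation that the role-set repair introduces no fresh roles is the right termination argument.

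There is, however, one step that fails. In your first case you claim that after the fix for the first condition (prepending $\trans{o_f^*}{e}{a}$ to every initial interaction) ``condition 2 also holds because $\p e$ is added to the roles of both branches while the rest is untouched.'' That argument only shows that equality of the role sets is \emph{preserved}: if $\roles(C')\neq\roles(C'')$ to begin with --- say some role $\p f$ occurs in $C'$ but not in $C''$ --- then adding $\p e$ to both sides leaves the two sets unequal. Since your second case is explicitly restricted to the situation where ``only condition 2 fails,'' a choice that violates both conditions never receives the role-set repair under your procedure, and the resulting choreography does not have unique points of choice. The paper's pattern avoids this by applying the two sub-transformations in sequence on the same subterm: first make the fresh $\p e$ the common sender of all initial interactions, and then, with $\p e$ now available as the sender required by the second sub-pattern, add in parallel the interactions $\trans{o_f^*}{e}{a}$ for every role $\p a$ missing from one branch. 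With that one-line correction your argument goes through.
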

\begin{proof}
We have to apply the pattern to all the subterms failing to satisfy the
condition. We start from smaller subterms, and then move to larger
ones. We have to show that at the end all the subterms satisfy the
conditions.  We consider the transformation ensuring
condition \ref{ch:conn} first,
then the one ensuring condition \ref{ch:roles}.

For the first transformation, given a subterm $D'+D''$ there are two
cases: either some initial interactions inside $D'$ and $D''$ have been
modified or not. In the second case the thesis follows by inductive
hypothesis. In the first case all the initial
interactions have been changed, and the freshly introduced role is the
new sender in all of them. 
Thus the condition is satisfied.

Let us consider the second transformation. The transformation has no
impact on subterms, since it only adds interactions in parallel. For
the whole term the thesis holds by construction. Note that the term
continues to satisfy the other condition.

Weak trace equivalence is ensured since only interactions on private
operations, which have no impact on weak trace equivalence, are added
by the transformation.
\end{proof}

\subsection{Connecting Repeated Operations}
If different interactions use the same operation, one has to ensure
that messages do not get mixed, i.e.\ that the output of one
interaction is not matched with the input of a different interaction
on the same operation. Since in our model outputs do not contain the
target participant, and inputs do not contain the expected sender, the
problem is particularly relevant. The same problem however may occur
also if the output contains the target participant and the input the
expected sender, but only for interactions between the same
participants, as shown by the example below.

\begin{example}\label{ex:sameroles}
Consider the choreography $C$ below:
\[
C = (\trans{o_1}{a}{b};\trans{o}{b}{c};\trans{o_2}{c}{d})
\parallel 
(\trans{o_3}{a}{b};\trans{o}{b}{c};\trans{o_4}{c}{d})
\]
The two interactions $\trans{o}{b}{c}$, both between role $\p b$ and role $\p c$,
may interfere, since the send from one of them could be received by
the other one.  
\end{example}

Clearly, given two interactions on the same operation, the
problem does not occur if the output of the first is never enabled
together with the input of the second, and vice versa.

To ensure this, in \cite{SEFM08} we required causal dependencies
between the input in one interaction and the outputs of different
interactions on the same operation, and vice versa.  To formalize this
concept we introduce the notion of \emph{event}.  For each interaction
$\trans{o^?}{a}{b}$ we distinguish two events: a sending event at role
$\p a$ and a receiving event at role $\p b$. The sending event and the
receiving event in the same interaction are called matching events. We
use $e$ to range over events, $s$ to range over sending events and $r$
to range over receiving events. We denote by $\co{e}$ the event
matching event $e$. If event $e$ precedes event $e'$ in the causal
relation then $e'$ can become enabled only after $e$ has been
executed, thus they cannot be matched.  However, this requirement
alone is not suitable for our aims, since this condition is too strong
to be enforced by a transformation preserving weak traces.

Luckily, also events in a suitable relation of conflict cannot be
matched.  Requiring that events are in opposite branches of a choice,
however, is not enough, since they may be both enabled. For instance,
in $\trans{o}{a}{b}+\trans{o}{c}{d}$ the output from $\p a$ can be
captured by $\p d$\footnote{This choreography however has not unique
  points of choice.}.
However,
if the role containing the event is already aware of the choice, then
the events are never enabled together and thus cannot be matched,
since when one of them becomes enabled the other one has already been
discarded. We call \emph{full conflict} such a relation. For instance,
in 
$$(\trans{o'}{a}{b};\trans{o}{b}{a};
\trans{o'}{a}{c})+(\trans{o''}{a}{b};\trans{o}{b}{a};\trans{o''}{a}{c})$$
the two interactions on $o$ do not interfere. Note that an
interference could cause the wrong continuation to be chosen.

Thus we require here that a send and a receive in different
interactions but on the same operation are either in a causality
relation or in a full conflict relation. We call this condition
\emph{causality safety}. Causality safety can be enforced by a
transformation preserving weak traces. We refer to~\cite{TR} for the
proof that causality safety ensures the desired properties of
choreography projection.

To define causality safety we thus need to define a \emph{causality relation}
and a \emph{full conflict relation}.

\begin{definition}[Causality relation]
Let us consider a choreography $C$.  A \emph{causality relation} $\leq_C$ is
a partial order among events of $C$.  We define $\leq_C$ as the
minimum partial order satisfying:
\begin{description}
\item[sequentiality:] for each subterm of the form $C';C''$ and each
  role $\p a$, if $r'$ is a receive event in $C'$ at role $\p a$, $e''$
  is a generic event in $C''$ at role $\p a$
  then $r' \leq_C e''$;
\item[synchronization:] for each receive event $r$ and generic event
  $e'$, $r \leq_C e'$ implies $\co{r} \leq_C e'$.
\end{description}
\end{definition}  

\begin{definition}[Full conflict relation]\label{defin:synchconf}
Let us consider a choreography $C$.  
A \emph{full conflict relation} $\fconf_C$ is a relation among events of $C$.
We define $\fconf_C$ as the smallest symmetric relation satisfying:
\begin{description}
\item[choice:] for each subterm of the form $C'+C''$ and each
  role $\p a$, if $e'$ is an event in $C'$ at role $\p a$, $e''$
  is an event in $C''$ at role $\p a$ then $e' \fconf_C e''$;
\item[causality:] if $e' \fconf_C e''$ and $e'' \leq_C e'''$ then $e'
  \fconf_C e'''$.
\end{description}
\end{definition}

We can now define causality safety.

\begin{definition}[Causality safety]
A choreography $C$ is \emph{causality safe} iff for each pair of interactions
$\trans{o^?}{a}{b}$ (with events $s_1$ and $r_1$) and
$\trans{o^?}{c}{d}$ (with events $s_2$ and $r_2$) on the same
operation $o^?$ we have that:
\begin{itemize}
\item $s_1 \leq_C r_2 \lor r_2 \leq_C s_1 \lor s_1 \fconf_C r_2$; 
\item $s_2 \leq_C r_1 \lor r_1 \leq_C s_2 \lor s_2 \fconf_C r_1$.
\end{itemize}
\end{definition}

As an example, choreography $C = \trans{o}{a}{b} \parallel
\trans{o}{c}{d}$ is not causality safe, since the output on $o$ at $\p
a$ is enabled together with the input on $o$ at $\p d$, thus enabling
a communication from $\p a$ to $\p d$ which should not be allowed.

A causality safety issue is immediately solved by renaming one of the
operations.  However, this changes the specification. We show how
to solve the causality safety issue while sticking to the original
(weak) behavior. We have different approaches according to the
top-level operator of the smaller term including the conflicting
interactions.  Thus we distinguish \emph{parallel causality safety},
\emph{sequential causality safety} and \emph{choice causality safety}.

To solve parallel causality safety issues we apply a form of
\emph{expansion law} that transforms the parallel composition into
nondeterminism, thus either removing completely the causality safety
issue or transforming it into sequential or choice causality safety,
discussed later on.  Using the expansion law one can transform any
choreography into a \emph{normal form} defined as below.

\begin{definition}[Normal form]
A choreography $C$ is in \emph{normal form} if it is written as:
$$\sum_{i} \trans{o^?_i}{a_i}{b_i};{C}_i$$ 
where $\sum_i$ is $n$-ary nondeterministic choice (we can see the
empty sum as $\zero$), and $C_i$ is in normal form for each $i$. 
\end{definition}

The expansion law is defined below.

\begin{definition}[Expansion law]
\begin{eqnarray*}
(\sum_{i} \trans{o^?_i}{a_i}{b_i};\;C_i)\parallel(\sum_{j}
\trans{o^?_j}{a_j}{b_j};C_j) & = & (\sum_{i}
\trans{o^?_i}{a_i}{b_i};(C_i\parallel(\sum_{j}
\trans{o^?_j}{a_j}{b_j};C_j)))\\ & & + (\sum_{j}
\trans{o^?_j}{a_j}{b_j};(C_j \parallel (\sum_{i}
\trans{o^?_i}{a_i}{b_i};C_i)))
\end{eqnarray*}
\end{definition}

The expansion law is correct w.r.t.\ trace equivalence, in the
sense that applying the expansion law does not change the
set of traces (neither strong nor weak).

\begin{lemma}\label{lemma:expansion}
Let $C$ and $C'$ be choreographies, with $C'$ obtained by
applying the expansion law to a subterm of $C$. Then
$C$ and $C'$ are (strong and weak) trace equivalent.
\end{lemma}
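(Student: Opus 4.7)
The plan is to first establish the expansion law at the top level as a strong bisimilarity, and then to lift this to arbitrary subterm contexts by a congruence argument; trace equivalence, in both strong and weak flavours, then follows immediately, since bisimilar systems have the same strong maximal traces and weak traces are obtained by simply filtering out labels of the form $\trans{o^*}{a}{b}$.

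For the base case, let $L$ denote the left-hand side of the expansion law and $R$ its right-hand side. Neither summand of the parallel composition in $L$ can fire $\tick$ immediately (each is either $\zero$ or a guarded sum), so by rule {\sc Parallel} every transition of $L$ is derived from one of its two components. A transition picking the $k$-th branch of the left summand produces, via {\sc Choice}/{\sc Interaction}/{\sc Sequence}, the residual $(\one;C_k)\parallel(\sum_j \trans{o^?_j}{a_j}{b_j};C_j)$; the corresponding derivation in $R$ first applies {\sc Choice} at the outer $+$ to select the left expansion summand, then {\sc Choice} at the inner sum on index $k$, then {\sc Interaction} and {\sc Sequence}, yielding the residual $\one;(C_k\parallel(\sum_j \trans{o^?_j}{a_j}{b_j};C_j))$. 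These residuals differ only by a leading $\one;$ which will be discharged transparently by {\sc Seq-end} on every subsequent step. I would therefore take the candidate bisimulation to consist of the pair $(L,R)$ together with all pairs $(X,\one;X)$ and their reachable successors, and check closure under the LTS rules, with the symmetric case for the right summand of the expansion handled analogously.

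For the congruence step I exploit the structural nature of the rules in Figure~\ref{fig:ioclts}: every transition of a composite term is derived either from a transition of an immediate subterm or from an axiom. A standard induction on the depth of the subterm being replaced then shows that strong bisimilarity is preserved by $;$, $\parallel$, and $+$, so strong trace equivalence is preserved as well; weak trace equivalence is inherited by the filtering argument above.

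The main obstacle is the residual mismatch in the base case: after one step the LHS leaves a $\one;C_k$ trapped inside a parallel composition, while the RHS leaves a single leading $\one;$ outside the whole parallel composition. The candidate bisimulation must be robust to this asymmetry, which reduces to the auxiliary observation that $X$ and $\one;X$ always match each other's transitions (the only applicable rule for $\one;X$ is {\sc Seq-end}, which simulates $X$ step-for-step). Once this small fact is in place, the rest of the argument is a routine structural induction.
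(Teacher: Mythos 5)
Your proof is correct and follows the same core idea as the paper's: transitions not touching the expanded subterm are matched trivially, and the first transition involving it collapses both sides onto matching residuals. You are in fact somewhat more careful than the paper, which asserts that executing an initial interaction "reduces both the terms to the same term" whereas the residuals actually differ by the placement of a $\one;$ prefix (inside versus outside the parallel composition) --- your observation that $X$ and $\one;X$ have identical transitions (only \textsc{Seq-end} can fire, since $\one$ offers only $\tick$) is precisely what closes that gap, and your bisimulation-plus-congruence packaging is a sound, if heavier, way to organize the same argument.
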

\begin{proof}
Labels not involving the subterm are easily mimicked. Consider the
first label involving the subterm. If no such label exists then the
thesis follows. Otherwise, the label corresponds to the execution of
one of the interactions $\trans{o^?_i}{a_i}{b_i}$ or
$\trans{o^?_j}{a_j}{b_j}$. Executing any of these interactions reduces
both the terms to the same term. The thesis follows.
\end{proof}

Using the expansion law we can put any choreography $C$ in
normal form.

\begin{proposition}[Normalization]\label{prop:normform}
Given a choreography $C$ there is a choreography $C'$ in normal
form such that $C$ and $C'$ are weak trace equivalent.
\end{proposition}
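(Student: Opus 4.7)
My plan is to proceed by structural induction on $C$, using Lemma~\ref{lemma:expansion} to eliminate top-level parallel compositions and a distributivity property of sequence over choice to eliminate top-level sequential compositions. Weak trace equivalence of each rewriting step will be justified locally from the LTS rules.

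For the base cases, $\zero$ is the empty sum and thus already in normal form, and a single interaction $\trans{o^?}{a}{b}$ can be written as the one-summand form $\trans{o^?}{a}{b}; \one$; the presence of $\one$ is a mild syntactic issue which I would handle by treating $\one$ as an atomic normal form, since the definition given covers $\zero$ explicitly as the empty sum but does not explicitly account for $\one$. For the inductive step, if $C = C_1 + C_2$ I normalize each branch and concatenate their summands. If $C = C_1; C_2$, I normalize $C_1$ to $N_1 = \sum_i \trans{o_i^?}{a_i}{b_i}; D_i$ and then argue that $N_1; C_2$ is weak trace equivalent to $\sum_i \trans{o_i^?}{a_i}{b_i}; (D_i; C_2)$, with an extra summand obtained from the normal form of $C_2$ added whenever $N_1$ admits a $\tick$-transition; the continuations $D_i; C_2$ are then recursively normalized. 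If $C = C_1 \parallel C_2$, I first normalize each component and then apply Lemma~\ref{lemma:expansion} to distribute the parallel over the sums, producing a choice whose continuations are parallel compositions in which at least one side is strictly smaller, and I recursively normalize these.

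The main obstacle will be termination of the procedure, since the expansion step does not reduce structural size in a naive sense: the continuations $D_i \parallel N_2$ and $D_j \parallel N_1$ still contain a top-level parallel operator. A suitable well-founded measure is needed; for example, the multiset of interaction counts of the parallel components, compared by the multiset extension of the natural order, strictly decreases with each expansion because each $D_i$ contains strictly fewer interactions than $N_1$. Similarly, sequential distribution replaces $D_i; C_2$ by its normal form applied to a structurally smaller $D_i$, justifying the inner recursion. Once termination is established, weak trace equivalence of the final normal form with $C$ follows by composing the equivalences at each step: distribution of sequence through choice is derivable from rules (Sequence) and (Seq-end), concatenation of branches in a choice is immediate from (Choice), and the expansion step is exactly Lemma~\ref{lemma:expansion}.
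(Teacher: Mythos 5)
Your proposal is correct and follows essentially the same route as the paper: flatten choices by associativity, distribute sequential composition over the summands of the normalized first component, and use the expansion law (Lemma~\ref{lemma:expansion}) for parallel composition, with termination secured by a decreasing measure (the paper inducts directly on the number of interactions, which also strictly decreases under each expansion step, so your multiset measure is more machinery than needed but sound). The only divergence is the treatment of $\one$: you admit it as an extra atomic normal form, whereas the paper replaces it by a fresh private interaction so as to stay within the stated grammar; both are defensible resolutions of the same definitional wrinkle.
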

\begin{proof}
The proof is by structural induction on the number of interactions
occurring in $C$. The cases of interactions and $\zero$ are
trivial. Choreography $\one$ can be replaced by any
interaction on a private operation without changing the set of weak traces.  For sequential
composition note that $(\sum_{i} \trans{o^?_i}{a_i}{b_i};C_i);C'$ and
$(\sum_{i} \trans{o^?_i}{a_i}{b_i};C_i;C')$ have the same set of
traces. $C_i;C'$ can be transformed in normal form by inductive
hypothesis. For nondeterministic choice the thesis is trivial (it is
easy to check that nondeterministic choice is associative). For
parallel composition one can apply the expansion law, and the thesis
follows from Lemma~\ref{lemma:expansion} and inductive hypothesis.
\end{proof}

Let us now consider sequential causality safety. The term should have
the form $C';C''$, with an interaction $\trans{o^?}{a}{b}$ in $C'$ and
an interaction $\trans{o^?}{c}{d}$ in $C''$. If the term satisfies
connectedness for sequence then the send at $\p c$ cannot become enabled
before the receive at $\p b$. We thus have to ensure that the receive at
$\p d$ becomes enabled after the message from the send at $\p a$ has been
received. Choose fresh operations $o_f^*$ and $o_g^*$ and replace
$\trans{o^?}{a}{b}$ with
$\trans{o^?}{a}{b};\trans{o_f^*}{b}{d};\trans{o_g^*}{d}{b}$.

The pattern for choice causality safety is similar. The term should
have the form $C'+C''$, with an interaction $\trans{o^?}{a}{b}$ in
$C'$ and an interaction $\trans{o^?}{c}{d}$ in $C''$. Assume that
there is an interference between the send at $\p a$ and the receive at
$\p d$ (the symmetric case is similar). 
Choose fresh operations $o_f^*$ and $o_g^*$ and replace
$\trans{o^?}{c}{d}$ with
$\trans{o_f^*}{c}{d};\trans{o_g^*}{d}{c};\trans{o^?}{c}{d}$


To prove the correctness of the transformation we need two auxiliary lemmas.

\begin{lemma}\label{lemma:seqcaussend}
Let $C$ be a choreography which is connected for sequence. Then all sending events in $C$ depend on sending events in initial interactions in $C$.
\end{lemma}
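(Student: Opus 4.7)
The plan is to argue by structural induction on $C$, exploiting the fact that connectedness for sequence is inherited by every subterm, so that the inductive hypothesis may be freely applied to each component. The leaf cases are immediate: a single interaction $\trans{o^?}{a}{b}$ has its only sending event inside the only initial interaction, while $\one$ and $\zero$ carry no sending events at all. The parallel and choice cases are equally easy, since for $C_1 \parallel C_2$ and $C_1 + C_2$ one has $\transI(C) = \transI(C_1) \cup \transI(C_2)$, and so any witness produced by the IH on the relevant component already lies in $\transI(C)$.

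The substantive case is the sequential composition $C = C_1; C_2$. If the sending event $s$ of interest lies inside $C_1$, the IH on $C_1$ yields a dependency on some sending event of $\transI(C_1)$, and the definition of $\transI$ on sequences gives $\transI(C_1) \subseteq \transI(C)$, closing this branch. If instead $s$ lies inside $C_2$, the IH applied to $C_2$ first produces $s' \leq_C s$ where $s'$ is the sending event of some initial interaction $\trans{o^?}{c}{d} \in \transI(C_2)$, located at role $\p c$. When $C_1$ admits a direct $\tick$-transition we have $\transI(C) = \transI(C_1) \cup \transI(C_2)$, so $s'$ itself already witnesses the claim.

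The only delicate situation is when $C_1$ cannot directly terminate, so that $\transI(C) = \transI(C_1)$ and a genuine chain back into $C_1$ must be assembled. Here I will invoke connectedness for sequence on the pair $(\transF(C_1), \transI(C_2))$: any interaction $\trans{o'^?}{a'}{b'} \in \transF(C_1)$ must satisfy $\p{b'} = \p c$, and thus yields a receive event $r$ at role $\p c$ inside $C_1$. The sequentiality clause of the causality relation then delivers $r \leq_C s'$, and the synchronization clause upgrades this to $\co{r} \leq_C s'$, where $\co{r}$ is the matching sending event, itself located in $C_1$. A second invocation of the IH on $C_1$ produces a sending event in $\transI(C_1) \subseteq \transI(C)$ below $\co{r}$, and transitivity of $\leq_C$ closes the chain down to $s$. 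The main obstacle, and the piece requiring the most care, is ensuring that $\transF(C_1)$ is non-empty in this sub-case; this I would handle by a short side induction over the user-level syntax showing that an interaction-free term always admits a direct $\tick$-transition, so the branch hypothesis forces $C_1$ to contain at least one interaction and, recursively, forces $\transF(C_1)$ to be populated.
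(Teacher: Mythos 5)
Your proof is correct and follows essentially the same route as the paper's: structural induction on $C$, with the sequential case resolved by combining connectedness for sequence with the sequentiality and synchronization clauses of $\leq_C$, then closing by transitivity and the inductive hypothesis. If anything, you are more careful than the paper, which silently assumes $\transF(C_1)\neq\emptyset$ and does not separate the sub-case where $C_1$ can terminate immediately (so that $\transI(C_2)\subseteq\transI(C)$ and no chain back into $C_1$ is needed); your side induction covering both points is a minor but genuine tightening.
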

\begin{proof}
By structural induction on $C$. The only difficult case is sequential
composition, when $C=C';C''$. For events in $C'$ the thesis follows by
inductive hypothesis. For events in $C''$, if they are in an initial
interaction, from connectedness for sequence they are in the same role
as receiving events in the final interactions in $C'$, thus they
depend on them. By synchronization they also depend on sending events
in $C'$. The thesis follows by induction and by transitivity. For
events not in initial interactions in $C''$, by inductive hypothesis
they depend on events in initial interactions in $C''$, thus the thesis follows
from what proved above by transitivity.
\end{proof}


\begin{lemma}\label{lemma:issueseq}
Let $C=C';C''$ be a choreography which is connected for sequence. Then
there are no causality safety issues between receiving events in $C'$
and sending events in $C''$.
\end{lemma}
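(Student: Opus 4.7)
The plan is to prove the stronger statement that for every receive event $r_1$ in $C'$ and every send event $s_2$ in $C''$ we have $r_1 \leq_C s_2$. This satisfies one of the disjuncts of causality safety, so no issues can arise between receives in $C'$ and sends in $C''$ regardless of the operations they use.

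First, I would apply Lemma~\ref{lemma:seqcaussend} to $C''$ (which inherits connectedness for sequence from its enclosing $C$) to obtain a send event $s_0$ in some initial interaction of $C''$ with $s_0 \leq_C s_2$. The connectedness for sequence condition on $C=C';C''$ forces $s_0$ to sit at the unique role $\p{b_0}$ that is simultaneously the receiving role of every final interaction of $C'$ and the sending role of every initial interaction of $C''$.

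Second, I would establish the dual of Lemma~\ref{lemma:seqcaussend}: in any choreography $D$ connected for sequence, every receive event $r$ in $D$ is followed (in $\leq_D$) by some receive event $r_f$ in a final interaction of $D$. The proof is by structural induction on $D$, parallel to that of Lemma~\ref{lemma:seqcaussend}: interactions are trivial; for parallel and choice the inductive hypothesis produces a final receive inside the chosen component, which remains final in $D$. The delicate case is $D=D_1;D_2$ with $r$ in $D_1$: the inductive hypothesis gives $r_f^1 \in \transF(D_1)$ with $r \leq r_f^1$, and one needs $r_f^1 \in \transF(D_1;D_2)$, which holds provided $D_2$ can terminate. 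Since $\zero$ does not occur in source terms and the amending transformations only insert new interactions, this assumption is always met. Applied to $C'$, this dual yields a receive $r_f \in \transF(C')$, necessarily at role $\p{b_0}$, with $r_1 \leq_C r_f$.

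Finally, the sequentiality clause of the causality relation, applied to the subterm $C';C''$ at role $\p{b_0}$, gives $r_f \leq_C s_0$. Transitivity of $\leq_C$ closes the chain $r_1 \leq_C r_f \leq_C s_0 \leq_C s_2$, which is the desired conclusion. The main obstacle is proving the dual of Lemma~\ref{lemma:seqcaussend}; once it is available, the rest is a short chain combining that dual, Lemma~\ref{lemma:seqcaussend}, the sequentiality rule, and transitivity.
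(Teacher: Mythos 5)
Your overall strategy --- reducing the lemma to showing $r_1 \leq_C s_2$ for every receive $r_1$ in $C'$ and send $s_2$ in $C''$, and handling the $C''$ side via Lemma~\ref{lemma:seqcaussend} --- is the same as the paper's. The gap is on the $C'$ side: the ``dual of Lemma~\ref{lemma:seqcaussend}'' you rely on is false as stated. Take $D=\trans{o_1}{a}{b};\trans{o_2}{b}{c}$, which is connected for sequence. Here $\transF(D)=\{\trans{o_2}{b}{c}\}$ (the second interaction cannot emit $\tick$ immediately, so the final interactions of the first component are discarded), and the only final receive is at $\p c$. The receive of $o_1$, at $\p b$, is \emph{not} below it: sequentiality only yields $r_{o_1}\leq_D s_{o_2}$, and the synchronization clause propagates causality from a receive to its \emph{matching send} ($r\leq e$ implies $\co{r}\leq e$), never from a send to its matching receive, so nothing in the minimal partial order places $r_{o_1}$ below $r_{o_2}$. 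The same example shows that your discharge of the ``delicate case'' rests on a misreading of $\transF$: the side condition $\exists C''.\ D_2\arro{\tick}C''$ means $D_2$ can terminate \emph{immediately}, not eventually, and it fails for any $D_2$ that must first perform an interaction --- which is the common case, not a corner case excluded by banning $\zero$.

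The repair, which is what the paper's proof does, is to aim for a final \emph{send} of $C'$ rather than a final receive, and to use the synchronization clause (which your argument never invokes) to cross from $C'$ into $C''$. Concretely: a non-final receive $r$ of $C'$ is, by connectedness for sequence applied to the enclosing sequential subterm of $C'$, at the same role as the send $s'$ of a following initial interaction, so $r\leq_C s'$ by sequentiality; iterating, $r$ ends up below some final send of $C'$. Separately, every final receive $r_f$ of $C'$ is at the role $\p b_0$ of the initial senders of $C''$ and hence satisfies $r_f\leq_C s_0$ by sequentiality, whence by synchronization $\co{r_f}\leq_C s_0$, i.e.\ the final \emph{sends} of $C'$ are also below $s_0$. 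Chaining these with $s_0\leq_C s_2$ from Lemma~\ref{lemma:seqcaussend} gives $r\leq_C s_2$. Without the synchronization step your chain cannot be closed.
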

\begin{proof}
From connectedness for sequence all the receiving events in final
interactions of $C'$ and all the sending events in initial
interactions of $C''$ are performed by the same role. Thus, from the
sequentiality condition in the definition of causality relation, they
are causally related. Now the more difficult case is when the send $s$
in $C''$ is not initial and the receive $r$ in $C'$ is not
final. Thanks to Lemma~\ref{lemma:seqcaussend} the send $s$ depends on
an initial send, and by transitivity on each final receive. Thanks to
synchronization it depends also on each final send.  Assume $r$ is not
final. Thanks to connectedness for sequence it is in the same role of
a following send $s'$, thus $s'$ depends on $r$. If $s'$ is final the
thesis follows. Otherwise we iterate the procedure on a smaller term,
and the thesis follows by induction.
\end{proof}

We can now prove the correctness of the transformation.

\begin{proposition}\label{prop:connectingrep}
Let $C$ be a choreography which is connected for sequence, has unique
points of choice and is parallel causality safe. 
We can derive using
the pattern above a choreography $C'$ which has no sequential or
choice causality safety issue such that $C$ and $C'$ are weak trace
equivalent. Also, $C'$ is still connected for sequence, has unique
points of choice and is parallel causality safe.
\end{proposition}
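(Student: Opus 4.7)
The plan is to iterate the two patterns to discharge one causality safety issue at a time, always using freshly chosen private operations $o_f^*$ and $o_g^*$. Before each rewrite I would identify a pair of interactions $\trans{o^?}{a}{b}$ and $\trans{o^?}{c}{d}$ sharing an operation whose events are neither in $\leq_C$ nor in $\fconf_C$, locate the smallest subterm containing both, and apply the sequential or the choice pattern according to the top-level operator of that subterm. The parallel case is ruled out by the hypothesis of parallel causality safety, and in the sequential case Lemma~\ref{lemma:issueseq} lets me orient the conflict so that the offending send sits in $C'$ and the offending receive in $C''$. Termination follows because each step strictly reduces the (finite) number of problematic pairs sharing an operation and, thanks to freshness, no new pair can appear on any previously involved operation.

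For the sequential case, after replacing $\trans{o^?}{a}{b}$ with $\trans{o^?}{a}{b};\trans{o_f^*}{b}{d};\trans{o_g^*}{d}{b}$, I would argue that the new receive of $o_f^*$ at $\p d$ is sequentially followed at $\p d$ by every later event of $\p d$ in $C''$, that by synchronization it is preceded by the send of $o_f^*$ at $\p b$, and that the latter depends on the original send at $\p a$ via sequentiality at $\p b$ and synchronization; combined with Lemma~\ref{lemma:seqcaussend} this yields $s_1 \leq_{C'} r_2$. For the choice case, the handshake places the receive at $\p d$ on a chain causally after the send of $o_f^*$ at $\p c$; since $\roles(C')=\roles(C'')$ there is an event at $\p c$ in the opposite branch, so the choice clause gives full conflict with the send of $o_f^*$ at $\p c$, and the causality clause propagates it to $r_2$, producing $s_1 \fconf_{C'} r_2$.

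The main obstacle is preservation of the three pre-existing properties. Connectedness for sequence is direct, since each replacement starts with an interaction whose sender is the original sender, ends with an interaction whose receiver is the original receiver, and its internal joints reuse a single role. Parallel causality safety is immediate because no parallel composition is created and all inserted operations are fresh. The delicate point is unique points of choice, condition~\ref{ch:roles}: the sequential handshake injects $\p d$ into $C'$ and the choice handshake injects $\p c$ into $C''$, which can perturb $\roles$ at some enclosed choice; I would fix this by composing each rewrite with the parallel-dummy transformation from the proof of Proposition~\ref{prop:connectingchoice}, whose fresh, parallel additions rebalance roles without affecting connectedness for sequence or causality safety. Finally, weak trace equivalence follows exactly as in Propositions~\ref{prop:connectingseq} and \ref{prop:connectingchoice}: every inserted interaction is on a private operation and hence invisible to weak traces.
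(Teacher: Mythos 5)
Your overall strategy --- discharge one issue at a time with the two patterns, use Lemma~\ref{lemma:issueseq} to orient sequential issues, check preservation separately --- matches the paper, and your sequential-case chain of sequentiality and synchronization steps is essentially the paper's (Lemma~\ref{lemma:seqcaussend} is not actually needed there). However, your choice-case argument has a genuine gap. You anchor the full conflict at role $\p c$: some event $e_c$ at $\p c$ in the branch $C'$ is, by the choice clause, in full conflict with the send of $o_f^*$ at $\p c$ in $C''$, and the causality clause then yields $e_c \fconf r_2$. But causality safety requires $s_1 \fconf r_2$ (or a causal ordering between them), where $s_1$ is the send at role $\p a$, and nothing gives you $e_c \leq s_1$: the event at $\p c$ whose existence you extract from $\roles(C')=\roles(C'')$ may sit in a parallel thread of the branch, causally unrelated to $s_1$, so the conflict does not propagate from $e_c$ to $s_1$. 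The paper closes this by anchoring the conflict at the unique choosing role instead: by condition~\ref{ch:conn} of unique points of choice, all initial interactions of $C'$ and $C''$ have the same sender $\p e$; by Lemma~\ref{lemma:seqcaussend}, $s_1$ depends on an initial send at $\p e$ in $C'$, and after the transformation $r_2$ depends on the send at $\p c$, which in turn depends on an initial send at $\p e$ in $C''$; the two initial sends at $\p e$ are in full conflict by the choice clause, and the causality clause (applied on both sides, using symmetry) propagates this to $s_1 \fconf r_2$. So it is precisely in the choice case that Lemma~\ref{lemma:seqcaussend} and condition~\ref{ch:conn} are indispensable, and your sketch uses neither there.

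On preservation of the hypotheses you are, if anything, more careful than the paper: the observation that the sequential handshake injects role $\p d$ into one branch of an enclosed choice, potentially breaking condition~\ref{ch:roles}, is a real issue that the paper's one-line justification does not address, and repairing it with the parallel dummy interactions of Proposition~\ref{prop:connectingchoice} is sensible. The remaining points (termination via freshness, weak trace equivalence via private operations, preservation of connectedness for sequence through unchanged initial senders and final receivers) are sound and aligned with the paper.
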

\begin{proof}
We prove the thesis for just one causality safety issue, the more
general result follows by iteration. Take a causality safety issue,
and consider the smallest subterm including the two conflicting
interactions. We have two cases according to whether the subterm has
the form $C';C''$ or $C'+C''$.

Let us consider the first case. Thanks to Lemma~\ref{lemma:issueseq} the issue
is between the send at $\p a$ in an interaction $\trans{o^?}{a}{b}$ in $C'$ and a
receive at $\p d$ in an interaction $\trans{o^?}{c}{d}$ in $C''$. 

After the transformation, the receive at $\p d$ in $\trans{o^?}{c}{d}$
depends on the receive at $\p d$ in $\trans{o_f^*}{b}{d}$ by
sequentiality. The receive at $\p d$ in $\trans{o_f^*}{b}{d}$ depends on
the send at $\p b$ inside the same interaction by synchronization, on the
receive at $\p b$ in $\trans{o^?}{a}{b}$ by sequentiality, and on the
send at $\p a$ inside the same interaction again by synchronization.
This proves that the issue is solved.
%
%
%

Let us consider the second case. Assume that the issue is between a
send in an interaction $\trans{o^?}{a}{b}$ in $C'$ and a receive in an
interaction $\trans{o^?}{c}{d}$ in $C''$. The sends at $\p a$ and at $\p c$
are causally dependent on sends in an initial interaction of the term
thanks to Lemma~\ref{lemma:seqcaussend}. The transformation adds a
dependency between the send at $\p c$ and the receive at $\p d$ inside the
same interaction. Thus the send at $\p a$ and the receive at $\p d$ are
in full conflict, since they are both causally dependent on sends from
initial interactions of the term, which are all in the same role
thanks to unique points of choice.
Again, this proves that the issue is
solved.

Weak trace equivalence is ensured since only interactions on private
operations, which have no impact on weak trace equivalence, are added
by the transformation.

We have to show that the other properties are preserved. All of them
hold by construction for the newly introduced terms.  Connectedness
for sequence is preserved since the transformation preserves senders
of initial interactions and receivers of final interactions. Unique
points of choice are preserved for the same reason, and since the
transformation does not add new roles. Parallel causality safety is
preserved since the transformation only introduces interactions on
fresh operations.
\end{proof}


\subsection{Combining the Amending Techniques}\label{sec:combining}
Till now we have shown that given a choreography which fails to
satisfy one of the connectedness conditions, we can transform it into
an equivalent one that satisfies the chosen connectedness condition.
Some care is required to avoid that, while ensuring that one condition
is satisfied, violations of other conditions are introduced. In fact,
if such a situation would occur repeatedly, the connecting procedure
may not terminate.

The following theorem proves that we can combine the connecting
patterns presented in the previous sections to define a terminating
algorithm transforming any choreography into a connected choreography.

\begin{theorem}[Connecting choreographies]
There is a terminating procedure that given any choreography
$C$ creates a new choreography $D$ such that:
\begin{itemize}
\item $D$ is connected;
\item $C$ and $D$ are weak trace equivalent.
\end{itemize}
\end{theorem}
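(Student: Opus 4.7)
My plan is to compose the four transformations developed in the previous subsections -- normalization, connecting choices, connecting sequences, and connecting repeated operations -- into a single pipeline applied in a specific order, so that each later step preserves the invariants established by the earlier ones. The final connected choreography $D$ is the output of the pipeline, and weak trace equivalence is preserved at every step, hence by transitivity also by the whole pipeline.

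Concretely, I would first invoke Proposition~\ref{prop:normform} on $C$ to obtain a weak trace equivalent $C_1$ in normal form; since normal form contains no parallel composition, parallel causality safety holds trivially in $C_1$. I would then apply the pattern of Proposition~\ref{prop:connectingchoice} to obtain $C_2$ with unique points of choice. Because this transformation only introduces interactions on freshly chosen private operations (and a fresh role as the common sender), no new causality safety conflict of any flavor can arise, so parallel causality safety is preserved. The third step applies Proposition~\ref{prop:connectingseq} to obtain a $C_3$ that is also connected for sequence; once again the added operations are fresh, which suffices to preserve parallel causality safety, and the role $\p e$ introduced at each application of the pattern can be kept distinct from the senders of initial interactions in enclosing choices so that unique points of choice is not disturbed. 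Finally I would apply Proposition~\ref{prop:connectingrep} to $C_3$: its statement already guarantees that the output $D$ is causality safe and remains connected for sequence, has unique points of choice, and is parallel causality safe. Hence $D$ is connected, as required.

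Termination is straightforward: each of the four propositions yields a terminating procedure that processes a finite collection of offending subterms and adds only finitely many new interactions of bounded shape, so composing four terminating procedures produces a terminating procedure. The main obstacle is the mutual preservation argument for the intermediate steps, and in particular the verification that Proposition~\ref{prop:connectingseq} does not reintroduce violations of unique points of choice, since its pattern rewrites the initial interactions of a right-hand side $D''$ in a subterm $D';D''$, which could in principle alter $\transI$ of an enclosing choice. The key observation is that such an alteration occurs only when $D'$ admits a $\tick$-transition -- a degenerate case which, after normalization and the prior application of connecting choices, is easy to control -- and that even then the replacement uses a single fresh sender $\p e$, so the unique-sender condition is maintained up to that renaming. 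Once this book-keeping is dispatched, the preservation arguments for the other steps are uniform (freshness of the new private operations), and the theorem follows.
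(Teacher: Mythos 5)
Your proposal is correct and follows essentially the same pipeline as the paper: normalize to remove parallel causality safety issues, then enforce unique points of choice and connectedness for sequence, then repair sequential and choice causality safety via Proposition~\ref{prop:connectingrep}, with weak trace equivalence and termination following step by step. The one substantive deviation is that you normalize the \emph{entire} choreography, whereas the paper applies Proposition~\ref{prop:normform} only to the subterms that actually violate parallel causality safety; your choice is still correct for the theorem as stated (and it even simplifies your later book-keeping, since after full normalization no left operand of a sequential composition can perform $\tick$, so the sequence pattern never perturbs $\transI$ of an enclosing choice), but it destroys all concurrency rather than only the concurrency that must be removed, which runs against the paper's stated aim of minimizing the loss of parallelism. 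Your second deviation --- running the choice pass and the sequence pass separately instead of the paper's single bottom-up pass dispatching on the top-level operator --- is harmless, and your explicit attention to the mutual-preservation obligations is, if anything, more careful than the paper's own argument.
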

\begin{proof}
We can apply the normalization procedure to all the subterms of
choreography $C$ that do not satisfy parallel causality safety,
starting from the smallest subterms to the largest, to get a
choreography $C'$ which is parallel causality safe (since the
undesired parallel compositions have been removed) and which is weak
trace equivalent to $C$ thanks to Proposition~\ref{prop:normform}.

Now, again from the smallest subterms to the largest, we can apply to
$C'$ the procedure for ensuring unique points of choice 
to those subterms which have a top-level
nondeterministic choice operator, and the procedure for making them
connected for sequence to those subterms which have a
top-level sequential composition operator obtaining a choreography
$C''$.

For terms of the first kind, thanks to
Proposition~\ref{prop:connectingchoice}, we obtain terms which have
unique points of choice. 
They are also
parallel causality safe and connected for sequence, since the
transformation may not create these issues.
The same holds for terms of the second kind thanks to
Proposition~\ref{prop:connectingseq}.  In both the cases, the
resulting term is weak trace equivalent to the starting one.
By transitivity $C''$ is weak trace equivalent to $C$.

Finally, to each sequential and choice causality safety issue we can
apply the procedure to solve it. From Proposition~\ref{prop:connectingrep}
we know that the resulting choreography $D$ has no sequential or
choice causality safety issue, still satisfies the other properties
and is weak trace equivalent to $C''$. The thesis follows by transitivity.
\end{proof}

\begin{example}\label{ex:par}
We now apply our procedure to the paradigmatic choreography $C=
\trans{o}{a}{b} \parallel \trans{o}{c}{d}$. First note that $C$ does
not satisfy parallel causality safety.  By application of the
expansion law we obtain: 
$$C_1= \trans{o}{a}{b};\trans{o}{c}{d} +
\trans{o}{c}{d};\trans{o}{a}{b}$$

Proceeding from smaller to larger subterms, we first encounter the
subterms $\trans{o}{a}{b};\trans{o}{c}{d}$ and
$\trans{o}{c}{d};\trans{o}{a}{b}$ which are not connected for sequence
(and are not sequential causality safe).  By applying the
corresponding pattern to the two subterms, we obtain: 
$$C_2=
\trans{o}{a}{b};
\trans{o_1^*}{b}{e'};\trans{o_2^*}{e'}{c};\trans{o}{c}{d} +
\trans{o}{c}{d}; \trans{o_3^*}{d}{e''};\trans{o_4^*}{e''}{a};
\trans{o}{a}{b}$$  

Now, the whole term does not have
unique points of choice,
and is not choice causality safe.  By applying the first of the transformations
ensuring unique points of choice, we obtain:
$$
\begin{array}{lll}
C_3 & = & \trans{o_5^*}{e}{a};\trans{o}{a}{b};
\trans{o_1^*}{b}{e'};\trans{o_2^*}{e'}{c};\trans{o}{c}{d} +\\
&& \trans{o_6^*}{e}{c};\trans{o}{c}{d};
\trans{o_3^*}{d}{e''};\trans{o_4^*}{e''}{a};
\trans{o}{a}{b}
\end{array}
$$
By applying the transformation ensuring 
that both the branches have the same set of roles,
we obtain:
$$
\begin{array}{lll}
C_4 & = & \big( \trans{o_5^*}{e}{a};\trans{o}{a}{b};
\trans{o_1^*}{b}{e'};\trans{o_2^*}{e'}{c};\trans{o}{c}{d} 
\parallel \trans{o_7^*}{e}{e''} \big)\ 
+ \\
& & \big( \trans{o_6^*}{e}{c};\trans{o}{c}{d};
\trans{o_3^*}{d}{e''};\trans{o_4^*}{e''}{a};
\trans{o}{a}{b} \parallel \trans{o_8^*}{e}{e'} \big)
\end{array}$$
We are now left with sequential and choice causality safety issues.
We start by solving the sequential ones, which are one for each
branch.  As expected, they are between the sender of the first
interaction on $o$ and the receiver of the second.
By applying the pattern we get:
$$
\begin{array}{lll}
C_5 & = & \big( \trans{o_5^*}{e}{a};\trans{o}{a}{b};\trans{o_9^*}{b}{d};\trans{o_{10}^*}{d}{b};
\trans{o_1^*}{b}{e'};\trans{o_2^*}{e'}{c};\trans{o}{c}{d} 
\parallel \trans{o_7^*}{e}{e''} \big)\ 
+ \\
& & \big( \trans{o_6^*}{e}{c};\trans{o}{c}{d};\trans{o_{11}^*}{d}{b};\trans{o_{12}^*}{b}{d};
\trans{o_3^*}{d}{e''};\trans{o_4^*}{e''}{a};
\trans{o}{a}{b} \parallel \trans{o_8^*}{e}{e'} \big)
\end{array}$$
We are left with choice causality safety issues between the interaction $\trans{o}{a}{b}$ in the first branch and the interaction $\trans{o}{c}{d}$ in the second branch. By applying the pattern we get:
$$
\begin{array}{lll}
C_6 & = & \big( \trans{o_5^*}{e}{a};\trans{o_{13}^*}{a}{b};\trans{o_{14}^*}{b}{a};\trans{o}{a}{b};\trans{o_9^*}{b}{d};\trans{o_{10}^*}{d}{b};\\
&&\quad\trans{o_1^*}{b}{e'};\trans{o_2^*}{e'}{c};\trans{o}{c}{d} 
\parallel \trans{o_7^*}{e}{e''} \big)\ 
+ \\
& & \big( \trans{o_6^*}{e}{c};\trans{o_{15}^*}{c}{d};\trans{o_{16}^*}{d}{c};\trans{o}{c}{d};\trans{o_{11}^*}{d}{b};\trans{o_{12}^*}{b}{d};\\
&&\quad\trans{o_3^*}{d}{e''};\trans{o_4^*}{e''}{a};
\trans{o}{a}{b} \parallel \trans{o_8^*}{e}{e'} \big)
\end{array}$$
This example shows that solving a parallel causality safety issue may
require a considerable amount of auxiliary communications. Thus, it is
advised to change the name of the operation if possible. If not
possible, our approach will solve the problem. Other issues are solved
more easily, since they involve no duplication of terms.
\end{example}

\section{Application: Two-Buyers Protocol}\label{sec:appl}
We show now how our transformation for connecting choreographies can
be used as an effective design tool for programming multiparty
choreographies.  We model the example reported in~\cite{hondaPOPL},
the two-buyers protocol, where two buyers -- $\p b_1$ and $\p b_2$ --
combine their finances for buying a product from a seller $\p s$.  The
protocol starts with $\p b_1$ asking the price for the product of
interest to $\p s$. Then, $\p s$ communicates the price to both $\p b_1$ and
$\p b_2$.  Subsequently, $\p b_1$ notifies $\p b_2$ of how much she is willing
to contribute to the purchase.  Finally, $\p b_2$ chooses whether to confirm the 
purchase and ask $\p s$ to send a 
delivery date for the product; otherwise, the choreography terminates.
%
We do not deal here with how this
choice is performed, as our choreographies abstract from data.

To create a quick prototype choreography $C$ for the two-buyers
protocol, we focus only on the main interactions. When writing the
choreography we do not worry about connectedness conditions, since we
will enforce them automatically later on. The code follows naturally:
$$
\begin{array}{ll}
C=&\trans{\op{price}}{b_1}{s};\ 
(\ \trans{\op{quote1}}{s}{b_1} \ \parallel \  \trans{\op{quote2}}{s}{b_2} \ );\ 
\trans{\op{contrib}}{b_1}{b_2}; \\ 
&(\ \trans{\op{ok}}{b_2}{s}; \trans{\op{delivery}}{s}{b_2} \ + \ \one \ )
\end{array}
$$ 
The code above is just a direct translation of our explanation in
natural language into a choreography.  We can immediately observe that
the choreography is not connected, since it contains $2$ violations of the connectedness conditions:
\begin{itemize}
\item the subterm $(\ \trans{\op{quote1}}{s}{b_1} \ \parallel \ 
\trans{\op{quote2}}{s}{b_2} \ );\ \trans{\op{contrib}}{b_1}{b_2}$ is not 
connected for sequence; thus,
e.g., $\p b_1$ may send the $\op{contrib}$ message before $\p b_2$ receives the message for
$\op{quote2}$\footnote{This may happen only with the asynchronous semantics.};
\item the subterm $(\ \trans{\op{ok}}{b_2}{s}; \trans{\op{delivery}}{s}{b_2} \ + 
\ \one \ )$ has not unique
points of choice.
\end{itemize}
We can apply our transformation to amend our choreography prototype,
transforming it into a connected choreography which is weak trace
equivalent to $C$, obtaining:
$$
\begin{array}{ll}
C_1=&\trans{\op{price}}{b_1}{s};\ 
(\ \trans{\op{quote1}}{s}{b_1};\ \trans{o_1^{*}}{b_1}{e_1} \ \parallel \ 
\trans{\op{quote2}}{s}{b_2};\ \trans{o_2^{*}}{b_2}{e_1} \ );\\
&\trans{o_3^{*}}{e_1}{b_1};\  \trans{\op{contrib}}{b_1}{b_2}; \ 
(\ \trans{\op{ok}}{b_2}{s};
\trans{\op{delivery}}{s}{b_2} \ + (\ \one \parallel \trans{o_4^{*}}{b_2}{s}\ ) \ )
\end{array}
$$ 
The choreography $C_1$ above is connected, thus it can be projected, and
the projection will be trace equivalent to $C_1$ itself (thanks to
the results in~\cite{SEFM08}), and weak trace equivalent to the original
choreography $C$.

\section{Conclusions}\label{sec:concl}
In previous work~\cite{SEFM08} we have defined syntactic conditions
that guarantee choreography connectedness, i.e.\ that the endpoints
obtained by the natural projection of a choreography correctly
implement the global specification given by the choreography itself.
In this paper we have presented a procedure for amending non-connected
choreographies by reducing parallelism and adding interactions on
private operations in such a way that all the above conditions are
satisfied, while preserving the observational semantics.  To the best
of our knowledge, only two other papers consider the possibility to
add messages to a choreography in order to make it correctly
projectable \cite{WWW,tsc}.

In \cite{WWW} non-connected sequences are connected by adding a synchronization 
from every role involved in a final interaction before the 
sequential composition to every role involved in an initial
interaction after the sequential composition, while non-connected choices 
are modified by selecting a dominant role responsible for taking the
choice and communicating it to the other participants.
Our approach reduces the number of added synchronizations in the case of
sequential compositions, and allows for a symmetric projection that  
treats all the roles in the same way in the case of choices.
In \cite{WWW} there is no discussion about repeated operations, and
to the best of our understanding
of the paper this is problematic.
In fact, the choreography $C$ in Example~\ref{ex:sameroles}
(written according to our syntax, which is slightly 
different w.r.t.\ the one adopted in \cite{WWW})
is well-formed according to the conditions in \cite{WWW},
but it is not projectable.

{\em Collaboration} and {\em Sequence diagrams} are popular visual representations
of choreographies~\cite{UML}. In this area, the work more similar to ours is \cite{tsc},
where the problem of amending collaboration diagrams is addressed.
In collaboration diagrams 
the interactions are labels of edges in a graph which
has one node for each role, and one
edge for each pair of interacting roles.
Interactions are organized in threads representing
sub-choreographies. Messages are totally ordered
within the same thread, while a partial order 
can be defined among interactions belonging to
different threads. The problem of amending non-connected
choreographies is simplified in this context:
collaboration diagrams do not have a global choice composition
operator among sub-choreographies, and two distinct threads use
two disjoint sets of operations.
For this reason, only the problem of non-connected
sequences is addressed in \cite{tsc}.

Another paper amending choreographies is \cite{amendingDATA}, however
they consider a different problem. In fact, they consider
choreographies including annotations on the communicated data, and
they concentrate on amending those annotations.

Possible extensions of the present work include its application to
existing choreography languages, such as~\cite{chor:popl13}, where
however the intended semantics of sequential composition is weaker,
since independent interactions can be freely swapped. The main
difficulty in extending the approach is dealing with infinite
behaviors. Having an iteration construct should not be a problem:
roughly one has to make connected the choice between iterating and
exiting the iteration using the techniques for choice, and the sequential
execution of different iterations using the techniques for
sequence. General recursion would be more difficult to deal with,
since subterms belonging to different iterations may execute in
parallel. Adding data to our language would not change the issues
discussed in the present paper.

\bibliographystyle{eptcs}
\bibliography{wwv2013}

\begin{thebibliography}{10}
\providecommand{\bibitemdeclare}[2]{}
\providecommand{\surnamestart}{}
\providecommand{\surnameend}{}
\providecommand{\urlprefix}{Available at }
\providecommand{\url}[1]{\texttt{#1}}
\providecommand{\href}[2]{\texttt{#2}}
\providecommand{\urlalt}[2]{\href{#1}{#2}}
\providecommand{\doi}[1]{doi:\urlalt{http://dx.doi.org/#1}{#1}}
\providecommand{\bibinfo}[2]{#2}

\bibitemdeclare{inproceedings}{chorDATA}
\bibitem{chorDATA}
\bibinfo{author}{Laura \surnamestart Bocchi\surnameend}, \bibinfo{author}{Kohei
  \surnamestart Honda\surnameend}, \bibinfo{author}{Emilio \surnamestart
  Tuosto\surnameend} \& \bibinfo{author}{Nobuko \surnamestart
  Yoshida\surnameend} (\bibinfo{year}{2010}): \emph{\bibinfo{title}{A Theory of
  Design-by-Contract for Distributed Multiparty Interactions}}.
\newblock In: {\sl \bibinfo{booktitle}{Proc. of CONCUR 2010}}, {\sl
  \bibinfo{series}{LNCS}} \bibinfo{volume}{6269},
  \bibinfo{publisher}{Springer}, pp. \bibinfo{pages}{162--176},
  \doi{10.1007/978-3-642-15375-4_12}.

\bibitemdeclare{article}{amendingDATA}
\bibitem{amendingDATA}
\bibinfo{author}{Laura \surnamestart Bocchi\surnameend},
  \bibinfo{author}{Julien \surnamestart Lange\surnameend} \&
  \bibinfo{author}{Emilio \surnamestart Tuosto\surnameend}
  (\bibinfo{year}{2012}): \emph{\bibinfo{title}{Three Algorithms and a
  Methodology for Amending Contracts for Choreographies}}.
\newblock {\sl \bibinfo{journal}{Sci. Ann. Comp. Sci.}}
  \bibinfo{volume}{22}(\bibinfo{number}{1}), pp. \bibinfo{pages}{61--104},
  \doi{10.7561/SACS.2012.1.61}.

\bibitemdeclare{inproceedings}{hondaESOP}
\bibitem{hondaESOP}
\bibinfo{author}{Marco \surnamestart Carbone\surnameend},
  \bibinfo{author}{Kohei \surnamestart Honda\surnameend} \&
  \bibinfo{author}{Nobuko \surnamestart Yoshida\surnameend}
  (\bibinfo{year}{2007}): \emph{\bibinfo{title}{Structured
  Communication-Centred Programming for Web Services}}.
\newblock In: {\sl \bibinfo{booktitle}{Proc. of ESOP'07}}, {\sl
  \bibinfo{series}{LNCS}} \bibinfo{volume}{4421},
  \bibinfo{publisher}{Springer}, pp. \bibinfo{pages}{2--17},
  \doi{10.1007/978-3-540-71316-6_2}.

\bibitemdeclare{inproceedings}{chor:popl13}
\bibitem{chor:popl13}
\bibinfo{author}{Marco \surnamestart Carbone\surnameend} \&
  \bibinfo{author}{Fabrizio \surnamestart Montesi\surnameend}
  (\bibinfo{year}{2013}): \emph{\bibinfo{title}{Deadlock-freedom-by-design:
  multiparty asynchronous global programming}}.
\newblock In: {\sl \bibinfo{booktitle}{Proc. of POPL 2013}},
  \bibinfo{publisher}{ACM}, pp. \bibinfo{pages}{263--274},
  \doi{10.1145/2429069.2429101}.

\bibitemdeclare{book}{UML}
\bibitem{UML}
\bibinfo{author}{Martin \surnamestart Fowler\surnameend}
  (\bibinfo{year}{2003}): \emph{\bibinfo{title}{{UML} Distilled: A Brief Guide
  to the Standard Object Modeling Language (3rd Edition)}}.
\newblock \bibinfo{publisher}{Addison Wesley}.

\bibitemdeclare{inproceedings}{hondaPOPL}
\bibitem{hondaPOPL}
\bibinfo{author}{Kohei \surnamestart Honda\surnameend}, \bibinfo{author}{Nobuko
  \surnamestart Yoshida\surnameend} \& \bibinfo{author}{Marco \surnamestart
  Carbone\surnameend} (\bibinfo{year}{2008}): \emph{\bibinfo{title}{Multiparty
  asynchronous session types}}.
\newblock In: {\sl \bibinfo{booktitle}{Proc. of POPL'08}},
  \bibinfo{publisher}{ACM Press}, pp. \bibinfo{pages}{273--284},
  \doi{10.1145/1328438.1328472}.

\bibitemdeclare{inproceedings}{SEFM08}
\bibitem{SEFM08}
\bibinfo{author}{Ivan \surnamestart Lanese\surnameend},
  \bibinfo{author}{Claudio \surnamestart Guidi\surnameend},
  \bibinfo{author}{Fabrizio \surnamestart Montesi\surnameend} \&
  \bibinfo{author}{Gianluigi \surnamestart Zavattaro\surnameend}
  (\bibinfo{year}{2008}): \emph{\bibinfo{title}{Bridging the Gap between
  Interaction- and Process-Oriented Choreographies}}.
\newblock In: {\sl \bibinfo{booktitle}{Proc. of SEFM'08}},
  \bibinfo{publisher}{IEEE Press}, pp. \bibinfo{pages}{323--332},
  \doi{10.1109/SEFM.2008.11}.

\bibitemdeclare{unpublished}{TR}
\bibitem{TR}
\bibinfo{author}{Ivan \surnamestart Lanese\surnameend},
  \bibinfo{author}{Fabrizio \surnamestart Montesi\surnameend} \&
  \bibinfo{author}{Gianluigi \surnamestart Zavattaro\surnameend}:
  \emph{\bibinfo{title}{Classifying Relationships between Interaction- and
  Process-Oriented Choreographies}}.
\newblock \bibinfo{note}{Available at
  \url{http://www.cs.unibo.it/~lanese/publications/fulltext/ioc.pdf}}.

\bibitemdeclare{article}{tsc}
\bibitem{tsc}
\bibinfo{author}{Gwen \surnamestart Salaun\surnameend}, \bibinfo{author}{Tevfik
  \surnamestart Bultan\surnameend} \& \bibinfo{author}{Nima \surnamestart
  Roohi\surnameend} (\bibinfo{year}{2012}): \emph{\bibinfo{title}{Realizability
  of Choreographies Using Process Algebra Encodings}}.
\newblock {\sl \bibinfo{journal}{IEEE Transactions on Services Computing}}
  \bibinfo{volume}{5}(\bibinfo{number}{3}), pp. \bibinfo{pages}{290--304},
  \doi{10.1109/TSC.2011.9}.

\bibitemdeclare{inproceedings}{WWW}
\bibitem{WWW}
\bibinfo{author}{Qiu \surnamestart Zongyan\surnameend}, \bibinfo{author}{Zhao
  \surnamestart Xiangpeng\surnameend}, \bibinfo{author}{Cai \surnamestart
  Chao\surnameend} \& \bibinfo{author}{Yang \surnamestart Hongli\surnameend}
  (\bibinfo{year}{2007}): \emph{\bibinfo{title}{Towards the Theoretical
  Foundation of Choreography}}.
\newblock In: {\sl \bibinfo{booktitle}{Proc. of WWW'07}},
  \bibinfo{publisher}{ACM Press}, pp. \bibinfo{pages}{973--982},
  \doi{10.1145/1242572.1242704}.

\end{thebibliography}

\appendix

\section{Projected Systems}\label{sec:orch}
We describe here the syntax and the operational semantics of \emph{projected
systems}. Projected systems, ranged over by $S$, $S'$, $\ldots$, are
composed by \emph{processes}, ranged over by $P$, $P'$, $\ldots$,
describing the behavior of participants,
\begin{eqnarray*}
P & ::= & o^? \mid \co{o^?} \mid \one \mid P;P'\mid \ P\mid P'\ \mid P+P' \mid \msg{o^?} \mid \zero \\
S & ::= & \role{P}{a}\mid S \parallel S'
\end{eqnarray*}

Processes include input action $o^?$ and output action $\co{o^?}$ on a
specific operation $o^?$ (either public or private), the empty process
$\one$, sequential and parallel composition, and nondeterministic
choice. The runtime syntax includes also messages $\msg{o^?}$, used in
the definition of the asynchronous semantics, and the deadlocked
process $\zero$. 
Projected systems are parallel compositions of roles. Each role has a role
name and executes a process. We require role names to be unique.

We define two LTS semantics for projected systems, one \emph{synchronous}
and one \emph{asynchronous}.  In the synchronous semantics input actions
and output actions interact directly, while in the asynchronous one
the sending event creates a message that, later, may interact with the
corresponding input, generating a receiving event.

The asynchronous LTS for projected systems is the smallest LTS closed
under the rules in Table~\ref{table:pocasynchlts}. We use $\gamma$ to
range over labels.  Symmetric rules for parallel compositions (both
internal and external) and choice have been omitted.

\begin{table}[t]
\[
\begin{array}{c}
\mathax{In}{o^? \arro{o^?}\one}\qquad
\mathax{Out}{\co{o^?} \arro{\co{o^?}} \one}\qquad
\mathax{Async-Out}{\msg{o^?} \arro{\msg{o^?}} \one}\qquad
\mathax{One}{\one \arro{\tick} \zero}
\\[3mm]
\mathrule{Sequence}{P \arro{\gamma} P' \quad \gamma \neq \tick}{P;Q \arro{\gamma} P';Q}
\qquad
\mathrule{Inner Parallel}{P \arro{\gamma} P' \quad \gamma \neq \tick}{P \mid Q \arro{\gamma} P' \mid Q}
\qquad
\mathrule{Choice}{P \arro{\gamma} P'}{P+Q \arro{\gamma} P'} \qquad
\mathrule{Seq-end}{P \arro{\tick} P' \quad Q \arro{\gamma} Q'}{P;Q \arro{\gamma} Q'}\\[3mm]
\mathrule{Inner Par-end}{P \arro{\tick} P' \quad Q \arro{\tick} Q'}{P \mid Q \arro{\tick} P' \mid Q'}
\qquad
\mathrule{Lift}{P \arro{\gamma} P' \quad \gamma \neq \co{o^?},\tick}{\role{P}{a} \arro{\gamma:\p a} \role{P'}{a}}\qquad
\mathrule{Lift-Tick}{P \arro{\tick} P'}{\role{P}{a} \arro{\tick} \role{P'}{a}}\qquad
\mathrule{Msg}{P \arro{\co{o^?}} P'}{\role{P}{a} \arro{\co{o^?}:\p a} \role{P' \mid \msg{o^?}}{a}}\\[3mm]
\mathrule{Synch}{S\arro{\msg{o^?}:\p a}S' \quad S'' \arro{o^?:\p b}S'''}{S \parallel S'' \arro{\trans{o^?}{a}{b}} S' \parallel S'''}
\qquad
\mathrule{Ext-Parallel}{S \arro{\gamma} S' \quad \gamma \neq \tick}{S \parallel S'' \arro{\gamma} S' \parallel S''}
\mathrule{Ext-Par-End}{S \arro{\tick} S' \quad S'' \arro{\tick} S'''}{S\parallel S'' \arro{\tick} S' \parallel S'''}
\end{array}
\]
\caption{Projected systems asynchronous semantics (symmetric rules omitted).}\label{table:pocasynchlts}
\end{table}

Rules {\sc In} and {\sc Out} execute input actions and output actions,
respectively. Rule {\sc Asynch-Out} makes messages available for a
corresponding input action. Rule {\sc One} terminates an empty
process. Rule {\sc Sequence} executes a step in the first component of a
sequential composition. Rule {\sc Inner Parallel} executes an action
from a component of a parallel composition, while rule {\sc Choice}
starts the execution of an alternative in a nondeterministic
choice. Rule {\sc Seq-end} acknowledges the termination of the first
component of a sequential composition, starting the second
component. Rule {\sc Inner Par-end} synchronizes the termination of
two parallel components.
Rule {\sc Lift} lifts actions to the system level, tagging them with
the name of the role executing them. Action $\tick$ instead is dealt
with by rule {\sc Lift-Tick}, which lifts it without adding the role
name. Outputs instead are stored as messages by rule {\sc
  Msg}. Rule {\sc Synch} synchronizes a message with the corresponding
input action, producing an interaction. Rule {\sc Ext-Parallel} allows
parallel systems to stay idle. Finally, rule {\sc Ext-Par-End}
synchronizes the termination of parallel systems.

The synchronous LTS for projected systems is the smallest LTS closed
under the rules in Table~\ref{table:pocasynchlts}, where rules {\sc
  Out}, {\sc Async-Out} and {\sc Msg} are deleted and the new rule
{\sc Sync-Out} below is added:
$$\mathax{Sync-Out}{\co{o^?} \arro{\msg{o^?}}_s \one}$$
This rule allows outputs in the synchronous semantics to send messages
that can directly interact with the corresponding input at the system
level.

Synchronous transitions are denoted as $\arro{\gamma}_s$ instead of
$\arro{\gamma}$, to distinguish them from the asynchronous ones. 

As for choreographies, we define \emph{traces}. We have different
possibilities: in addition to the distinction between strong and weak
traces, we distinguish \emph{synchronous} and \emph{asynchronous}
traces.

\begin{definition}[Projected systems traces]
A (strong maximal) synchronous trace of a projected system $S_1$ is a
sequence of labels $\gamma_1, \dots, \gamma_n$, where $\gamma_i$ is of
the form $\trans{o^?}{a}{b}$, or $\tick$ for each $i \in
\{1,\dots,n\}$, such that there is a sequence of synchronous 
transitions $S_1 \arro{\gamma_1}_s \dots \arro{\gamma_n}_s
S_{n+1}$ and such that $S_{n+1}$ has no outgoing
transitions of the same form.

A (strong maximal) asynchronous trace of a projected system $S_1$ is a
sequence of labels $\gamma_1, \dots, \gamma_n$, where $\gamma_i$ is of
the form $\co{o^?}:\p a$, $\trans{o^?}{a}{b}$, or $\tick$ for each $i
\in \{1,\dots,n\}$, such that there is a sequence of asynchronous
transitions $S_1 \arro{\gamma_1} \dots \arro{\gamma_n}
S_{n+1}$ and such that $S_{n+1}$ has no outgoing
transitions of the same form.
%
%
%
A weak (synchronous/asynchronous) trace of a projected system
$S_1$ is obtained by removing all labels $\co{o^*}:\p a$ and
$\trans{o^*}{a}{b}$ from a strong
(synchronous/asynchronous) trace of $S_1$.
%
\end{definition}

In the definition of traces, input actions and messages
are never considered, since they represent interactions with the
external world, while we are interested in the behavior of closed
systems. 


We have shown in~\cite{SEFM08} that a connected choreography and the
corresponding projected system have the same set of strong traces, if
the synchronous semantics is used for the projected system. The
correspondence with the asynchronous semantics is more complex, and we
refer to~\cite{SEFM08} for a precise description.

\section{Projection}\label{sec:proj}
In this section we show how to derive from a choreography $C$ a
projected system $S$ implementing it. The idea is to project the
choreography $C$ on the different roles, and build the system $S$ as
parallel composition of the projections on the different roles.  We
consider here the most natural projection, which is essentially an
homomorphism on most operators. As shown in~\cite{SEFM08}, if $C$
satisfies the connectedness conditions, the resulting projected system
is behaviorally related to the starting choreography $C$.

\begin{definition}[Projection function]
Given a choreography $C$ and a role $\p a$, the projection $\proj(C,\p a)$
of choreography $C$ on role $\p a$ is defined by structural induction on
$C$:\\

$\begin{array}{rcl}
\proj(\trans{o^?}{a}{b},\p a) & = & \co{o^?}\\
\proj(\trans{o^?}{a}{b},\p b) & = & o^?\\
\proj(\trans{o^?}{a}{b},\p c) & = & \one \textrm{ if } \p c \neq \p a,\p b\\ 
\proj(\one,\p a) & = & \one\\
\proj(\zero,\p a) & = & \zero\\
\proj(C;C',\p a) & = & \proj(C,\p a);\proj(C',\p a)\\
\proj(C \parallel C',\p a) & = & \proj(C,\p a) \mid \proj(C',\p a)\\
\proj(C+C',\p a) & = & \proj(C,\p a)+\proj(C',\p a)\\ 
\end{array}$
\end{definition}

We denote with $\parallel_{i \in I} S_i$ the parallel
composition of systems $S_i$ for each $i \in I$.

\begin{definition}
Given a choreography $C$, the projection of $C$ is the
system $S$ defined by:
$$\proj(C)=\parallel_{\p a \in \roles(C)} \role{\proj(C,\p a)}{a}$$
where $\roles(C)$ is the set of roles in $C$.
\end{definition}

\end{document}